\newtheorem{theorem}{Theorem}
\newtheorem{lemma}{Lemma}
\newtheorem{remark}{Remark}
\newtheorem{example}{Example}
\newtheorem{problem}{Problem}
\newtheorem{assumption}{Assumption}
\definecolor{mygreen}{rgb}{0.0, 0.7, 0.05}
\definecolor{myblue}{rgb}{0.0, 0.0, 0.61}
\title{Deep Learning--powered Iterative Combinatorial Auctions\footnotemark[1]}
\author{Jakob Weissteiner\\
        University of Zurich \& ETH AI Center\\  
        weissteiner@ifi.uzh.ch
        \And Sven Seuken\\
        University of Zurich \& ETH AI Center\\  
        seuken@ifi.uzh.ch
}
\newcommand{\cited}[1]{\citeauthor{#1}\ \shortcite{#1}}
\DeclareMathOperator*{\argmax}{arg\,\max}
\g@addto@macro{\@algocf@init}{\SetKwInOut{Parameter}{Parameter}}
\newenvironment{rcases}{\left.\begin{aligned}}{\end{aligned}\right\rbrace}
\begin{document}
\maketitle
\begin{abstract}
In this paper, we study the design of deep learning-powered iterative combinatorial auctions (ICAs). We build on prior work where preference elicitation was done via kernelized support vector regressions (SVRs). However, the SVR-based approach has limitations because it requires solving a machine learning (ML)-based winner determination problem (WDP). With expressive kernels  (like gaussians), the ML-based WDP cannot be solved for large domains. While linear or quadratic kernels have better computational scalability, these kernels have limited expressiveness. In this work, we address these shortcomings by using deep neural networks (DNNs) instead of SVRs. We first show how the DNN-based WDP can be reformulated into a mixed integer program (MIP). Second, we experimentally compare the prediction performance of DNNs against SVRs.  Third, we present experimental evaluations in two medium-sized domains which show that even ICAs based on relatively small-sized DNNs lead to higher economic efficiency than ICAs based on kernelized SVRs. Finally, we show that our DNN-powered ICA also scales well to very large CA domains.
\end{abstract}

\renewcommand{\thefootnote}{\fnsymbol{footnote}}
\footnotetext[1]{This paper is the slightly updated version of \cited{weissteiner2020deep} published at AAAI'20 including the appendix.}
\renewcommand{\thefootnote}{\arabic{footnote}}

\section{Introduction}\label{Introduction}
Combinatorial auctions (CAs) are used to allocate multiple heterogeneous items to  bidders in domains where these items may be substitutes or complements. Specifically, in a CA, bidders are allowed to submit bids on \emph{bundles} of items rather than on individual items. CAs are widely used in practice, including for the sale of airport landing and take-off slots \cite{rassenti1982combinatorial}, in industrial procurement \cite{bichler2006procurement}, and for the sale of spectrum licenses \cite{cramton2013spectrumauctions}.

One of the main challenges in large CAs is that the bundle space grows exponentially in the number of items. This  typically makes it impossible for the bidders to report their full value function, even for medium-sized domains. Thus, careful preference elicitation is needed in CAs.

\cited{nisan2006communication} have shown that to achieve full efficiency and support general value functions, exponential communication in the number of items is needed in the worst case. Thus, practical auction designs cannot provide efficiency guarantees in large CA domains. Instead, many recent proposals for CAs have focused on \emph{iterative combinatorial auctions (ICAs)} where  the auctioneer interacts with bidders over multiple rounds, eliciting a \textit{limited} amount of information, aiming to find a highly efficient allocation.

ICAs have found widespread application in practice. For example, just between 2008 and 2014, the combinatorial clock auction (CCA) \cite{ausubel2006clock} has been used to conduct more than $15$ spectrum auctions and has generated more than \$20 Billion in total revenue \cite{ausubel2017practical}. Another important application of ICAs are auctions for building offshore wind farms \cite{ausubel2011auction}. Given the value of the resources allocated in these real-world ICAs, increasing their efficiency by 1-2\% points already translates into welfare gains of millions of dollars. Therefore, improving the efficiency of ICAs is an important research challenge.

\subsection{Machine Learning and Mechanism Design}
Researchers have proposed various ways to further increase the efficiency of CAs by integrating machine learning (ML) methods into the mechanism. This research goes back to \cited{blum2004preference} and \cited{lahaie2004applying}, who studied the relationship between computational learning theory and preference elicitation in CAs. More recently, \cited{brero2018bayesian} and \cited{brero2019fast} introduced a Bayesian CA where they integrated ML into a CA to achieve faster convergence. In a different strand of research, \citeauthor{dutting2015payment}\ \shortcite{dutting2015payment,dutting2017optimal}, \citeauthor{narasimhan2016automated}\ \shortcite{narasimhan2016automated} and \citeauthor{golowich2018deep}\ \shortcite{golowich2018deep} used ML to directly learn a new mechanism (following the automated mechanism design paradigm).

Most related to the present paper is the work by \citeauthor{brero2017probably}\ \shortcite{brero2017probably,brero2018combinatorial,brero2019workingpaper}, who proposed an \emph{ML-powered ICA}. The core of their auction is an \emph{ML-powered preference elicitation algorithm}. As part of their  algorithm, they used kernelized support vector regressions (SVRs) to learn the nonlinear value functions of bidders. Recently, \cited{brero2019workingpaper} showed that their ML-based ICA achieves even higher efficiency than the CCA. 
However, because of runtime complexity issues, \citeauthor{brero2018combinatorial} \shortcite{brero2018combinatorial,brero2019workingpaper} focused on SVRs with linear and quadratic kernels. This leaves room for improvement, since bidders' valuations can have more complex structures than can be captured by linear or quadratic kernels.

\subsection{Our Approach Using Deep Neural Networks}
In this paper, we propose using DNNs instead of SVRs in ML-powered ICAs. In each round of the auction, we approximate bidders' value functions by DNNs and subsequently solve an optimization problem, a \textit{DNN-based winner determination problem (WDP)} , to determine which query to ask every bidder in the next round. Since our design involves doing this in each round of the auction, a central requirement for the practical implementation of the auction mechanism is to efficiently solve these DNN-based WDPs. Therefore, we show how to reformulate the WDP based on DNNs with rectified linear units (ReLUs) as activation functions into a (linear) mixed integer program (MIP) (Section \ref{MIPFormulation}).

Our approach is related to a recent line of research that uses MIP formulations to study specific properties of DNNs. For example, \cited{cheng2017maximum} studied resilience properties of DNNs. Similarly, \cited{Fischetti2018} used a MIP formulation for finding adversarial examples in image recognition.

To experimentally evaluate the performance of our DNN-based approach, we use the Spectrum Auction Test Suite (SATS) \cite{weiss2017sats} to generate synthetic CA instances (Section \ref{ExperimentalDesign}). We first compare the prediction performance of DNNs against SVRs in the two medium-sized domains GSVM and LSVM. Then we compare the economic efficiency of our DNN-powered ICA
against the SVR-powered ICA. In GSVM (a domain perfectly suited for the quadratic kernel), our DNN-powered ICA matches the efficiency of the SVR-powered ICA, while in the more complex LSVM domain, our DNN-powered ICA outperforms the SVR-powered ICA by 1.74\% points. 
Finally, we also demonstrate that our DNN-based approach scales well to a very large domain, by evaluating it in the MRVM domain (with 98 items and 10 bidders).
Overall, our results show that, perhaps surprisingly, even small-sized neural networks can be advantageous for the design of ICAs.
\section{Preliminaries}
We now present our formal model and review the ML-powered ICA  by \cited{brero2018combinatorial}.\footnote{We compare our DNN-powered ICA against the mechanism described in \cite{brero2018combinatorial} because, when we wrote this paper, \cite{brero2019workingpaper} was not available yet.  We slightly adopt the notation and use $B_i$ instead of $\widehat{\vartheta}_i.$}
\subsection{Iterative Combinatorial Auction}
We consider a CA setting with  $n$   bidders and $m$ indivisible items. Let $N:=\{1,\ldots,n\}$ and $M:=\{1,\ldots,m\}$ denote the set of bidders and items, respectively. We denote by $x\in \mathcal{X}:=\{0,1\}^m$ a bundle of items represented as an indicator vector, where $x_{j}=1$ iff item $j \in M$ is contained in $x$. Bidders' true preferences over bundles are represented by their (private) value functions $v_i: \{0,1\}^m\to \mathbb{R}_+,\,\, i \in N$, i.e.,  $v_i(x)$ represents bidder $i$'s true value for bundle $x$. Let $v:=(v_1,\ldots,v_n)$ denote the vector of bidders' value functions. The (possibly untruthful) reported valuations are denoted by $\hat{v}_i$ and $\hat{v}$, respectively.

By $a:=(a_1,\ldots,a_n) \in \mathcal{X}^n$ we denote an allocation of bundles to bidders, where $a_i\in \mathcal{X}$ is the bundle bidder $i$ obtains. An allocation $a$ is \textit{feasible} if each item is allocated to at most one bidder, i.e., $\forall j \in M: \sum_{i \in N}a_{ij} \le 1$. We denote the set of feasible allocations by $\mathcal{F}:=\left\{a \in \mathcal{X}^n:\sum_{i \in N}a_{ij} \le 1, \,\,\forall j \in M\right\}$. Payments are denoted by $p = (p_1,\ldots,p_n) \in \mathbb{R}^n$, where  $p_i$ is bidder $i$'s payment. Furthermore, we assume that bidders have quasilinear utility functions  $u_i(a):=v_i(a_i)-p_i$. The (true) \textit{social welfare} of an allocation $a$ is defined as $V(a):=\sum_{i \in N} v_i(a_i).$ Let $a^* \in \argmax_{a \in {\mathcal F}}V(a)$ be a feasible, social-welfare maximizing, i.e., \textit{efficient}, allocation given true value functions $v$. Then the efficiency of any feasible allocation $a \in \mathcal{F}$ is measured in terms of $a^*$ by $\frac{V(a)}{V(a^*)}.$

An ICA \emph{mechanism} defines how the bidders interact with the auctioneer, how the final allocation is determined, and how payments are computed. In this paper, we only consider ICAs that ask bidders to iteratively report their valuations $\hat{v}_i(x)$ for particular bundles $x$ selected by the mechanism. A finite set of such reported bundle-value pairs of bidder $i$ is denoted as ${B_i:=\left\{\left(x^{(k)},\hat{v}_i(x^{(k)})\right)\right\}_{k\in\{1,\ldots,n_i\}},\,n_i\in \mathbb{N},\,x^{(k)}\in \mathcal{X}}$, where $n_i$ is the total number of bundle-value pairs reported by bidder $i$. We let $B:=(B_1,\ldots,B_n)$ denote the tuple of reported bundle-value pairs obtained from all bidders. We define the \textit{reported social welfare} of an allocation $a$ given $B$ as
\begin{align}
\widehat{V}(a|B):=\sum_{i \in N:\, \left(a_i,\hat{v}_i(a_i)\right)\in B_i} \hat{v}_i(a_i),
\end{align} 
where the condition $\left(a_i,\hat{v}_i(a_i)\right)\in B_i$ ensures that only values for reported bundles contribute to the sum. Finally, the optimal feasible allocation  $a^*_{B}$ given $B$ is defined as 
\begin{align}
a^*_{B} \in \argmax_{a \in {\mathcal F}}\widehat{V}(a|B).
\end{align}
In the ICA mechanisms we consider in this paper, the final outcome is only computed based on the reported values $B$ at termination. Specifically, the mechanism determines a feasible allocation $a^*_{B}\in \mathcal{F}$ and charges payments $p$.

As the auctioneer can generally only ask each bidder $i$ for a limited number of bundle-value pairs $B_i$, the ICA mechanism needs a sophisticated preference elicitation algorithm. This leads to the following preference elicitation problem, where the goal is to find an (approximately) \textit{efficient} allocation with a limited number of value queries. More formally: 
\begin{problem}[\textsc{Preference Elicitation in ICA}]\label{PEproblem}
Given a cap $c_e$ on the number of value queries in an ICA, elicit from each bidder $i\in N$ a set of reported bundle-value pairs $B_i$ with $|B_i| \leq c_e$ such that the resulting efficiency
of $a^*_{B}$ is maximized, i.e.,
\begin{align}
B \in \argmax_{B: |B_i| \leq c_e} \dfrac{V(a^*_{B})}{V\left(a^*\right)}.
\end{align}
\end{problem}
In practice, a small domain-dependent cap on the number of queries is chosen, e.g., $ c_e \leq 500$.
\subsection{SVR-powered ICA}
We now present a brief review of the ML-based ICA introduced by \cited{brero2018combinatorial}.
At the core of their auction is an \textit{ML-based preference elicitation algorithm} which we reprint here 
as Algorithm \ref{ea}.

\begin{algorithm}
        \Parameter{Machine learning algorithm $\mathcal A$}
    \SetEndCharOfAlgoLine{}
    \SetKwRepeat{Do}{do}{while}
    $B^0=$ initial tuple of reported bundle-value pairs at $t = 0$  \\
    \Do{$\exists i \in N: a^{(t)}_{i} \notin  B^{t-1}_i$} {
            $t \leftarrow t+1$\\
            \textit{Estimation step:} $\tilde{V}^t := \mathcal A(B^{t-1})$\\
            \textit{Optimization step:} $a^{(t)} \in \argmax\limits_{ a \in {\mathcal F}} \tilde{V}^t(a)$\\
            \For{\emph{each bidder} $i$}{
                    \eIf{$a^{(t)}_{i} \notin  B^{t-1}_i$}
                    {Query value $\hat{v}_i (a^{(t)}_{i})$\\
                            $ B_i^{t}= B_i^{t-1} \cup \left\{\left(a^{(t)}_{i}, \hat{v}_i(a^{(t)}_{i})\right)\right\}$} {$ B_i^{t}= B_i^{t-1}$}
            }
    }
    Output tuple of reported bundle-value pairs $ B^t$
    \caption{\small \textsc{ML-based Elicitation}\, {\scriptsize (Brero et al. 2018)}}
    \label{ea}
\end{algorithm}

This algorithm is a procedure to determine $B$, i.e., for each bidder $i$ a set of reported bundle-value pairs $B_i$. Note that the algorithm is described in terms of a generic ML algorithm $\mathcal{A}$ which is used in the \textit{estimation step} (Line 4) to obtain the estimated social welfare function $\tilde{V}^t$ in iteration $t$. In the \textit{optimization step} (Line 5), an ML-based \textit{winner determination problem} is then solved to find an allocation $a^{(t)}$ that maximizes $\tilde{V}^t$. Finally, given the allocation $a^{(t)}$ from iteration $t$, each bidder $i$ is asked to report his value for the bundle $a^{(t)}_{i}$. The algorithm stops when it reaches an allocation $a^{(t)}$ for which all bidders have already reported their values for the corresponding bundles  $a_{i}^{(t)}$.

As the ML-algorithm $\mathcal{A}$, \cited{brero2018combinatorial}  used a sum of kernelized SVRs, i.e,
\begin{align}
\mathcal{A}(B^{t-1}):=\sum_{i \in N} \text{SVR}_i.
\end{align}
Given a bundle $x$, each $\text{SVR}_i$ computes the predicted value as $\text{SVR}_i(x) = w_i\cdot\phi (x)$, where the weights $w_i$ are determined through training on the reported  bundle-value pairs $B^{t-1}_i$. Kernelized SVRs are a popular non-linear regression technique, where a linear model is fitted on transformed data. The transformation of bundles $x$ is implicitly conducted by setting a kernel $k(x,x'):=\phi(x)^T\phi(x')$ in the dual optimization problem \cite{smola2004tutorial}.

\cited{brero2018combinatorial} called their entire auction mechanism the \textit{Pseudo Vickrey-Clarke-Groves} mechanism (PVM). We reprint it here as Algorithm \ref{pvm}. PVM calls the preference elicitation algorithm (Algorithm \ref{ea}) $n+1$ times: once including all bidders (called the \textit{main economy}) and $n$ times excluding a different bidder in each run (called the \textit{marginal economies}). The motivation for this design, which is inspired by the VCG mechanism, is to obtain payments such that the auction aligns bidders' incentives with allocative efficiency. Here, $B^{(-i)}$ denotes the output of Algorithm \ref{ea} by excluding bidder $i$ from the set of bidders. For each of the reported bundle-value pairs $B^{(-i)}$ obtained from the $n+1$ runs, PVM calculates a corresponding allocation that maximizes the \textit{reported social welfare}  (Line 2). The final allocation $a^{pvm}$ is determined as the allocation of the $n+1$ runs with the largest \textit{reported social welfare} (Line 3). Finally, VCG-style payments are calculated (Line 4). 

\begin{algorithm}[t]
    Run Algorithm \ref{ea} {\small $n+1$} times: $B^{(-\emptyset)},B^{(-1)},\ldots,B^{(-n)}$.\\
    Determine allocations: $a^{(-\emptyset)},a^{(-1)},\ldots,a^{(-n)},$ where $a^{(-i)}\in \argmax_{a \in {\mathcal F}}\widehat{V}(a|B^{(-i)}).$\\
    Pick $a^{pvm} \in \{a^{(-\emptyset)},a^{(-1)},\ldots,a^{(-n)}\}$ with maximal $\widehat{V}$.\\
    Charge each bidder $i$ according to:
    \begin{align}\label{pvmpayment}
    p_i^{pvm}:=\sum_{j\neq i}\hat{v}_j\left(a^{(-i)}_j\right)-\sum_{j\neq i}\hat{v}_j\left(a^{pvm}_j\right).
    \end{align}
    \caption{\textsc{PVM} {\small(Brero et al. 2018)}}
    \label{pvm}
\end{algorithm}

\section{Deep Neural Network-powered ICA}
In this section, we present the high level design of our DNN-powered ICA and discuss its advantages compared to the SVR-based design by \cited{brero2018combinatorial}.

Observe that the choice of the ML algorithm $\mathcal{A}$ affects Algorithm 1 
in two  ways: first, in the estimation step (Line 4), $\mathcal{A}$ determines how well we can predict bidders' valuations; second, in the optimization step (Line 5), it determines the complexity of the ML-based WDP. 
Thus, our situation is different from standard supervised learning because of the added optimization step. In particular, when choosing $\mathcal{A,}$ we must also ensure  that we obtain a practically feasible ML-based WDP.
Given that we have to solve the optimization step hundreds of times throughout an auction, in practice, we must impose a time limit on this step. In our experiments (Section \ref{ExperimentalDesign}), we follow \cited{brero2018combinatorial} and impose a 1 hour time limit on this step.

To make the optimization step feasible, \cited{brero2018combinatorial} used SVRs with quadratic kernels, for which the ML-based WDP is a quadratic integer program (QIP) and still practically solvable within a 1 hour time limit  for most settings. However, note that a quadratic kernel, while more expressive than a linear kernel, can still at most model two-way interactions between the items. To this end, \citeauthor{brero2017probably} \shortcite{brero2017probably,brero2019workingpaper} also evaluated more expressive kernels (gaussian and exponential). Even though these kernels had good prediction performance, the corresponding ML-based WDPs were too complex such that they always timed out and had a large optimization gap, thus leading to worse economic efficiency than the quadratic kernel. However, using SVRs with quadratic kernels leaves room for improvement, since bidders' valuations can be more complex than can be captured by quadratic kernels. 

In this work, we show how these shortcomings can be addressed by using DNNs instead of SVRs in the estimation and optimization steps of Algorithm 1.
DNNs are a concatenation of affine and non-linear mappings (see Figure \ref{fig:nn_tik}). They consist of several layers, which are themselves composed of multiple nodes. Between each of the layers an affine transformation is applied, which is followed by a nonlinear mapping called the \textit{activation function}.

One advantage of DNNs compared to (nonlinear) kernelized SVRs is that, for any number of layers and nodes, we always obtain a (linear) MIP for the DNN-based WDP whose size only grows linearly in the number of bidders and items (as we will show in Section \ref{MIPFormulation}). The key insight for this is to use \textit{rectified linear units (ReLUs)} as activation functions. Furthermore, in contrast to SVRs, DNNs do not use predefined feature transformations. While with SVRs, the choice of a good kernel usually relies on prior domain knowledge, DNNs automatically learn features in the process of training.

Following \cited{brero2018combinatorial}, we   decompose the estimated social welfare function in Line 4 of Algorithm \ref{ea}, $\tilde{V}^t=\mathcal A (B^{t-1})$, as follows:
\begin{align}
\tilde{V}^t=\sum_{i \in N}\tilde{v}_i^t,
\end{align}
where $\tilde{v}_i^t$ is an estimate of bidder $i$'s true value function $v_i$ and is trained on the data set $B_i^{t-1}$, i.e., the values queried up to round $t-1$. In this work, for every $i \in N$, we model $\tilde{v}^t_i$ using a \textit{fully connected feed-forward DNN} ${\mathcal{N}_i:\{0,1\}^m\to \mathbb{R}_+}$. 
Consequently, the estimated social welfare function $\tilde{V}^t$ is given as a sum of DNNs, i.e.,
\begin{align}
\tilde{V}^t:=~\sum_{i \in N}\mathcal{N}_i.
\end{align}
Note that each bidder's value function is modeled as a distinct DNN (with different architectures and parameters) because bidders' preferences are usually highly idiosyncratic.\footnote{A second reason for this construction is that this prevents bidders from influencing each others' ML-models. Please see \cite{brero2019workingpaper} for a detailed incentive analysis of PVM.}

\section{MIP Formulation of the DNN-based Winner Determination Problem}\label{MIPFormulation}
We now present the parameterization of each DNN and show how to reformulate the DNN-based WDP into a MIP. Thus, we focus on the optimization step (Line 5) of Algorithm \ref{ea}.

\subsection{Setting up the DNN-based WDP}
Figure \ref{fig:nn_tik} shows a schematic representation of a DNN $\mathcal{N}_i$. We now define the parameters of the DNNs. To simplify the exposition, we consider a fixed iteration step $t$ and no longer highlight the dependency of all variables on $t$.

Each DNN $\mathcal{N}_i$ consists of $K_i-1$ \textit{hidden layers} for $K_i \in \mathbb{N}$, with the $k$\textsuperscript{th} hidden layer containing $d^i_k$ hidden nodes, where $k\in \{1,\ldots,K_i-1\}$. As $\mathcal{N}_i$ maps bundles $x\in \{0,1\}^m$ to values, the dimension of the \textit{input layer} $d^i_0$ is equal to the number of items $m$ (i.e., $d^i_0:=m$) and the dimension of the \textit{output layer} is equal to 1 (i.e.,  $d^i_{K_i}:=1$). Hence, in total, a single DNN consists of $K_i+1$ layers. Furthermore, let $\varphi:\mathbb{R}\to \mathbb{R}_+$, $\varphi(s):=\max(0,s)$ denote the ReLU activation function.\footnote{$\varphi$ acts on vectors componentwise, i.e., for $s \in \mathbb{R}^k$ let ${\varphi(s):=(\varphi(s_1),\ldots,\varphi(s_k))}$.} The affine mappings between the $k$\textsuperscript{th} and the $k$+$1$\textsuperscript{st} layer are parameterized by a matrix $W^{i,k} \in \mathbb{R}^{d^i_{k+1}\times d^i_{k}}$ and a bias $b^{i,k} \in \mathbb{R}^{d^i_{k+1}}, \, k \in \{0,\ldots,K_i-1\}$.

To estimate the parameters $W^{i,k}$ and $b^{i,k}$ from data (i.e., from the bundle-value pairs $B_i$) we use the \textit{ADAM} algorithm, which is a popular gradient-based optimization technique \cite{kingma2014adam}.\footnote{For fitting the DNNs in all of our experiments we use \textsc{Python 3.5.3}, \textsc{Keras 2.2.4} and \textsc{Tensorflow 1.13.1}.} This is done in the \textit{estimation step} in Line 4 of Algorithm \ref{ea}. Thus,  after the estimation step, $W^{i,k}$ and $b^{i,k}$ are constants. 
In summary, given estimated parameters $W^i:=\{W^{i,k}\}_{0\le k \le K_i-1}$ and $b^i:=\{b^{i,k}\}_{0\le k \le K_i-1}$, each DNN ${\mathcal{N}_i(W^i,b^i):\{0,1\}^m \to \mathbb{R}_+}$ represents the following nested function:
\begin{align}\label{NN}
&\mathcal{N}_i(W^i,b^i)(x)=\\
&=\varphi\left(W^{i,K_i-1}\varphi\left(\ldots\varphi(W^{i,0}x+b^{i,0})\ldots\right)+b^{i,K_i-1}\right).\notag
\end{align}
The \textit{DNN-based WDP} in the optimization step in Line 5 of Algorithm \ref{ea} can now be formulated as follows:
\begin{align}\label{P1math}
&\max\limits_{a \in \mathcal{X}^n}\left\{\sum_{i \in N}\mathcal{N}_i\left(W^i,b^i\right)(a_i)\right\}\tag{OP1}\\
\textrm{s.t.}\quad&\sum_{i \in N} a_{ij}\le 1,\quad \forall j \in M\notag\\
&a_{ij}\in \{0,1\},\quad \forall j \in M, \forall i \in N.\notag
\end{align}
\begin{figure}[t!]
        \resizebox{\columnwidth}{!}{
                \begin{tikzpicture}
                [cnode/.style={draw=black,fill=#1,minimum width=3mm,circle}]
                \node[cnode=gray,label=90:$\mathlarger{\mathlarger{\varphi}}$] (s) at (12.5,-3) {};
                \node at (0,-4) {$\vdots$};
                \node at (3,-4.5) {$\mathlarger{\mathlarger{\vdots}}$};
                \node at (6,-4.5) {$\mathlarger{\mathlarger{\vdots}}$};
                \node at (9,-4.5) {$\mathlarger{\mathlarger{\vdots}}$};
                \node at (7.5,0) {$\mathlarger{\mathlarger{\ldots}}$};
                \node at (7.5,-1) {$\mathlarger{\mathlarger{\ldots}}$};
                \node at (7.5,-2) {$\mathlarger{\mathlarger{\ldots}}$};
                \node at (7.5,-3) {$\mathlarger{\mathlarger{\ldots}}$};
                \node at (7.5,-4) {$\mathlarger{\mathlarger{\ldots}}$};
                \node at (7.5,-5) {$\mathlarger{\mathlarger{\ldots}}$};
                \node at (7.5,-6) {$\mathlarger{\mathlarger{\ldots}}$};
                \node at (0,-5.5) {$\mathlarger{\mathlarger{d^i_0}}$};
                \node at (3,-6.5) {$\mathlarger{\mathlarger{d^i_1}}$};
                \node at (6,-6.5) {$\mathlarger{\mathlarger{d^i_2}}$};
                \node at (9,-6.5) {$\mathlarger{\mathlarger{d^i_{K_i-1}}}$};
                \node at (12.5,-3.6) {$\mathlarger{\mathlarger{d^i_{K_i}}}$};
                \node at (1.5,1) {$\mathlarger{\mathlarger{W^{i,0}(\cdot)+b^{i,0}}}$};
                \node at (4.5,1) {$\mathlarger{\mathlarger{W^{i,1}(\cdot)+b^{i,1}}}$};
                \node at (7.5,1) {$\mathlarger{\mathlarger{\ldots}}$};
                \node at (11,1) {$\mathlarger{\mathlarger{W^{i,K_i-1}(\cdot)+b^{i,K_i-1}}}$};
                \foreach \a in {1,...,4}
                {   \pgfmathparse{\a<4 ? \a : "m"}
                        \node[cnode=gray,label=180:$\mathlarger{\mathlarger{x_{\pgfmathresult}}}$]
                        (a-\a) at (0,{-\a-div(\a,4)}) {};
                }
                \foreach \a in {1,...,6}
                {       \pgfmathparse{\a<6 ? \a : "d^i_1"}
                        \node[cnode=gray,label=90:$\mathlarger{\mathlarger{\varphi}}$] (p-\a) at (3,{-(\a-1)-div(\a,6)}) {};
                }
                \foreach \a in {1,...,6}
                {       \pgfmathparse{\a<6 ? \a : "d^i_2"}
                        \node[cnode=gray,label=90:$\mathlarger{\mathlarger{\varphi}}$] (s-\a) at (6,{-(\a-1)-div(\a,6)}) {};
                }
                \foreach \a in {1,...,6}
                {       \pgfmathparse{\a<6 ? \a : "d^i_{K_i-1}"}
                        \node[cnode=gray,label=90:$\mathlarger{\mathlarger{\varphi}}$] (f-\a) at (9,{-(\a-1)-div(\a,6)}) {};
                        \draw (f-\a) -- (s);
                }
                \foreach \a in {1,...,4}
                {   \foreach \y in {1,...,6}
                        {   \draw (a-\a) -- (p-\y);
                        }
                }
                \foreach \a in {1,...,6}
                {   \foreach \y in {1,...,6}
                        {       \draw (p-\a) -- (s-\y);
                        }
                }
                
                \end{tikzpicture}
        }
        \caption{Schematic representation of a DNN $\mathcal{N}_i.$}
        \label{fig:nn_tik}
\end{figure}

\subsection{The MIP Formulation}
In its general form, \eqref{P1math} is a nonlinear, non-convex optimization problem and there do not exist practically feasible algorithms that are guaranteed to find a globally optimal solution. Therefore, we now reformulate \eqref{P1math} into a MIP.

Consider bidder $i\in N$. For every layer $k\in \{1,\ldots,K_i\}$ let $o^{i,k} \in \mathbb{R}_+^{d^i_{k}}$ denote the output of the $k$\textsuperscript{th} layer, which can be recursively calculated as
\begin{align}\label{layer_outpout}
o^{i,k}&=\varphi(W^{i,k-1}o^{i,k-1}+b^{i,k-1})=\notag\\
&=\max(0,W^{i,k-1}o^{i,k-1}+b^{i,k-1}).
\end{align}

For $k\in \{1,\ldots,K_i\}$, we introduce $d^i_k$ binary decision variables that determine which node in the corresponding layer is active, represented as a vector $y^{i,k}\in \{0,1\}^{d^i_k}$. We also introduce $2d^i_k$ continuous variables, represented as vectors $z^{i,k},\,s^{i,k}\in \mathbb{R}^{d^i_k}$. Each $z^{i,k}$ corresponds to the positive components of the output value $o^{i,k}$ of each layer and each $s^{i,k}$ is used as a slack variable representing the absolute value of the negative components of $o^{i,k}$. 

In our final MIP formulation, we will make use of \textit{``big-M''} constraints. For our theoretical results to hold, we need to make the following standard assumption.
\begin{assumption}{\small\textsc{(Big-M constraint)}}\label{bigM}
For all $i\in N$ and $k \in \{1,...,K_i\}$ there exists a large enough constant ${L\in \mathbb{R}_+}$, such that $\left|(W^{i,k-1}o^{i,k-1}+b^{i,k-1})_j\right|\le L$ for $1\le j\le d^i_k$.
\end{assumption}

In the following lemma, we show how to recursively encode a layer of $\mathcal{N}_i$ given the output value of the previous layer as multiple linear constraints.\footnote{In the following, all constraints containing vectors are defined componentwise.} 
\begin{lemma}\label{lemma1}
         Let $W^{i,k-1}o^{i,k-1}+b^{i,k-1} \in \mathbb{R}^{d^i_k}$ be fixed for a $k \in \{1,\ldots,K_i\}$. Furthermore, let $z^{i,k}, s^{i,k}\in\mathbb{R}^{d^i_{k}}$, ${y^{i,k}\in \{0,1\}^{d^i_{k}}}$. Consider the following linear constraints:
        \begin{align}
        &z^{i,k}-s^{i,k}=W^{i,k-1}o^{i,k-1}+b^{i,k-1}\label{imp00}\\
        &0\le z^{i,k}\le y^{i,k}\cdot L \label{imp11}\\
        &0\le s^{i,k} \le (1-y^{i,k}) \cdot L.\label{imp22}
        \end{align}
        The polytope defined by \eqref{imp00}-\eqref{imp22} is not empty and every element $\left(z^{i,k},s^{i,k},y^{i,k}\right)$ of this polytope satisfies  $z^{i,k}=o^{i,k}$. 
\end{lemma}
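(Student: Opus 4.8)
The plan is to exploit the fact that the constraints \eqref{imp00}--\eqref{imp22} decouple across the $d^i_k$ coordinates, so that it suffices to analyze a single component $j \in \{1,\ldots,d^i_k\}$. I would write $r_j := (W^{i,k-1}o^{i,k-1}+b^{i,k-1})_j$ for the fixed affine value; by \eqref{layer_outpout} the target is $o^{i,k}_j = \max(0, r_j)$, so I must show that every feasible triple $(z^{i,k}_j, s^{i,k}_j, y^{i,k}_j)$ satisfies $z^{i,k}_j = \max(0, r_j)$, and that at least one such triple exists.

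First I would read the binary variable $y^{i,k}_j$ as an on/off switch. The upper bounds in \eqref{imp11} and \eqref{imp22} are $y^{i,k}_j \cdot L$ and $(1-y^{i,k}_j)\cdot L$, so setting $y^{i,k}_j = 1$ forces $s^{i,k}_j = 0$ (its upper bound collapses to $0$), while $y^{i,k}_j = 0$ forces $z^{i,k}_j = 0$. Combined with the equality \eqref{imp00}, namely $z^{i,k}_j - s^{i,k}_j = r_j$, this pins down the remaining variable: $y^{i,k}_j = 1$ gives $z^{i,k}_j = r_j$, and $y^{i,k}_j = 0$ gives $s^{i,k}_j = -r_j$.

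The core of the argument is then a short case analysis on the sign of $r_j$, using the nonnegativity constraints together with Assumption~\ref{bigM}. If $r_j > 0$, the choice $y^{i,k}_j = 0$ is infeasible (it would require $s^{i,k}_j = -r_j < 0$), so only $y^{i,k}_j = 1$ remains, yielding $z^{i,k}_j = r_j = \max(0, r_j)$; the upper bound $z^{i,k}_j \le L$ is satisfied because $r_j = |r_j| \le L$ by Assumption~\ref{bigM}. Symmetrically, if $r_j < 0$, only $y^{i,k}_j = 0$ is feasible and gives $z^{i,k}_j = 0 = \max(0, r_j)$, with $s^{i,k}_j = -r_j = |r_j| \le L$ again valid by Assumption~\ref{bigM}. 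In the boundary case $r_j = 0$, both switch positions are feasible, but each forces $z^{i,k}_j = 0 = \max(0, r_j)$, so the value of $z^{i,k}_j$ is uniquely determined even though $y^{i,k}_j$ is not. Hence in every case $z^{i,k}_j = \max(0, r_j) = o^{i,k}_j$, and ranging over all $j$ gives $z^{i,k} = o^{i,k}$ componentwise.

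Non-emptiness follows from the same case analysis: for each coordinate at least one admissible switch position exists, and Assumption~\ref{bigM} guarantees the corresponding box constraint can be met, so a feasible triple can be assembled coordinatewise. I do not expect a genuine obstacle here, as this is the standard big-M linearization of a ReLU; the only points demanding care are invoking Assumption~\ref{bigM} exactly where non-emptiness and the upper bounds are needed, and treating the degenerate case $r_j = 0$, where the encoding is non-unique in $y^{i,k}_j$ but still unique in the quantity $z^{i,k}_j$ that the lemma constrains.
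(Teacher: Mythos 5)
Your proposal is correct and follows essentially the same route as the paper's proof: a componentwise case analysis on the sign of $c_j := (W^{i,k-1}o^{i,k-1}+b^{i,k-1})_j$, using the binary switch $y^{i,k}_j$ to enforce complementarity of $z^{i,k}_j$ and $s^{i,k}_j$, with Assumption~\ref{bigM} invoked for non-emptiness and the upper bounds. Your treatment is slightly more explicit than the paper's (e.g., spelling out why the wrong switch position is infeasible and noting the non-uniqueness of $y^{i,k}_j$ when $c_j=0$), but the substance is identical.
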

\begin{proof}
        For notational convenience, let $c:=W^{i,k-1}o^{i,k-1}+b^{i,k-1}$. Non-emptiness follows since $|c_j|\le L,\, 1\le j\le d^i_k$, by Assumption \ref{bigM}. From Constraints \eqref{imp11} and \eqref{imp22} it follows, for $1\le j \le d^i_k$, that if $z^{i,k}_j>0$ then $s^{i,k}_j=0$, and if  $s^{i,k}_j>0$ then $z^{i,k}_j=0$. We now have to distinguish the following three cases for each component of $c$:
        \begin{align*}
        &c_j<0 \implies y^{i,k}_j=0,\,s^{i,k}_j=-c_j, z^{i,k}_j=0=\varphi(c_j)\\
        &c_j>0 \implies y^{i,k}_j=1,\, s^{i,k}_j=0, z^{i,k}_j=c_j=\varphi(c_j)\\
        &c_j=0 \implies \left(z^{i,k}_j,s^{i,k}_j,y^{i,k}_j\right)\in \{(0,0,0),(0,0,1)\}
        \end{align*}
        Combining all cases yields that $z^{i,k}=\varphi(c_j)=o^{i,k}$.
\end{proof}

Given Lemma \ref{lemma1}, we can now reformulate the DNN-based WDP as a MIP. For this, we let $W^i,\, b^i$ denote the estimated parameters corresponding to $\mathcal{N}_i(W^i,b^i)$. Furthermore, let $a\in \mathcal{X}^n, y^{i,k}\in \{0,1\}^{d^i_k}$ and $z^{i,k},\,s^{i,k}\in \mathbb{R}^{d^i_k},$ for ${1\le k \le K_i}$ and $L$ be a constant satisfying Assumption \ref{bigM}.

\begin{align}\label{MIP}
&\hspace{1cm}\max\limits_{a\in \mathcal{X}^n, z^{i,k},s^{i,k},y^{i,k}}\left\{\sum_{i \in N}z^{i,K_i}\right\}\tag{OP2}\\
&\textrm{s.t.}\notag\\
&\begin{rcases}
&z^{i,0}=a_i\notag\\
&z^{i,k}-s^{i,k}=W^{i,k-1}z^{i,k-1}+b^{i,k-1}\notag\\
&0\le z^{i,k}\le y^{i,k}\cdot L \notag\\
&0\le s^{i,k} \le (1-y^{i,k}) \cdot L \notag\\
&y^{i,k}\in \{0,1\}^{d^i_k}
&\hspace{-0.38cm}\end{rcases}\hspace{-0.55cm}\begin{split}&\hspace{-0.1cm}\forall i\in N\notag\\&\hspace{-0.1cm} \forall k\in\{1,\ldots,K_i\}\end{split}\\
&\hspace{0.1cm}a_{ij}\in \{0,1\},\qquad \forall j\in M,\,\forall i \in N\notag\\
&\hspace{0.08cm}\sum_{i \in N} a_{ij}\le 1,\qquad \forall j\in M\notag
\end{align}

We are now ready to state our main theorem.

\begin{theorem}{\textsc{(MIP Formulation)}}\label{MIPTHM}
The DNN-based WDP as defined in \eqref{P1math} is equivalent to the MIP defined in \eqref{MIP}.\!     
\end{theorem}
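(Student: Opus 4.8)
The plan is to show that the two optimization problems have the same admissible allocations and the same objective value on each such allocation, so that their optima necessarily coincide. First I would observe that the allocation constraints in \eqref{P1math} and \eqref{MIP} are literally identical: in both, $a$ ranges over $\mathcal{X}^n$ subject to $a_{ij}\in\{0,1\}$ and $\sum_{i\in N}a_{ij}\le 1$ for all $j\in M$. Hence the set of admissible allocations agrees, and the only thing left to verify is that, for each fixed feasible $a$, the MIP objective $\sum_{i\in N}z^{i,K_i}$ equals the DNN objective $\sum_{i\in N}\mathcal{N}_i(W^i,b^i)(a_i)$.

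The core of the argument is an induction over the layers $k\in\{0,1,\ldots,K_i\}$, performed for each bidder $i$, establishing that any $(z^{i,k},s^{i,k},y^{i,k})$ feasible for \eqref{MIP} satisfies $z^{i,k}=o^{i,k}$, where $o^{i,k}$ is the true layer output defined recursively in \eqref{layer_outpout}. The base case is immediate from the constraint $z^{i,0}=a_i$, which is exactly the network input feeding \eqref{layer_outpout}, so $o^{i,0}=a_i=z^{i,0}$. For the inductive step, I would fix the allocation and assume $z^{i,k-1}=o^{i,k-1}$; then $W^{i,k-1}z^{i,k-1}+b^{i,k-1}=W^{i,k-1}o^{i,k-1}+b^{i,k-1}$ is a \emph{fixed} vector, so Lemma \ref{lemma1} applies verbatim to the constraint block for layer $k$. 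Lemma \ref{lemma1} then yields both that this block is satisfiable (non-emptiness) and that it forces $z^{i,k}=\varphi(W^{i,k-1}o^{i,k-1}+b^{i,k-1})=o^{i,k}$. Iterating up to $k=K_i$ gives $z^{i,K_i}=o^{i,K_i}=\mathcal{N}_i(W^i,b^i)(a_i)$ by the nested definition in \eqref{NN}.

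With this layer-wise claim in hand, the equivalence follows in both directions. Given any feasible allocation $a$ for \eqref{P1math}, the non-emptiness part of Lemma \ref{lemma1}, applied layer by layer, lets me extend $a$ to a fully feasible point of \eqref{MIP}, and the claim shows this point has MIP objective equal to $\sum_{i\in N}\mathcal{N}_i(W^i,b^i)(a_i)$. Conversely, every feasible point of \eqref{MIP} projects onto a feasible $a$ for \eqref{P1math} whose output variables $z^{i,K_i}$ are pinned to $\mathcal{N}_i(W^i,b^i)(a_i)$ by the same claim. Therefore the two problems attain equal objective over the same allocation set, so their optimal values and corresponding optimal allocations coincide, which is the asserted equivalence.

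The main subtlety to handle carefully is that Lemma \ref{lemma1} is stated for a \emph{fixed} input vector $W^{i,k-1}o^{i,k-1}+b^{i,k-1}$, whereas in \eqref{MIP} the quantity $z^{i,k-1}$ is itself a decision variable. The induction is precisely what resolves this: once the allocation $a$ is fixed, the variables become determined one layer at a time, so that when Lemma \ref{lemma1} is invoked at layer $k$ its hypothesis of a fixed input is genuinely met. I would also flag that, while $z^{i,k}$ is uniquely determined at every layer, the slack $s^{i,k}$ and the indicator $y^{i,k}$ need not be unique in the degenerate case where a pre-activation component equals exactly $0$; this non-uniqueness is harmless, however, since it never affects $z^{i,k}$ and hence never affects the objective.
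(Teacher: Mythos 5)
Your proposal is correct and follows essentially the same route as the paper's own proof: fixing an allocation, setting $z^{i,0}=a_i$, and applying Lemma \ref{lemma1} layer by layer up to $k=K_i$ for each bidder. You merely make explicit what the paper leaves terse --- the induction that justifies treating $W^{i,k-1}z^{i,k-1}+b^{i,k-1}$ as a fixed vector when invoking the lemma, the two-directional feasibility-and-objective correspondence, and the harmless non-uniqueness of $(s^{i,k},y^{i,k})$ at zero pre-activations --- which are sound refinements rather than a different argument.
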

\begin{proof}
Consider \eqref{P1math}. For each bidder $i \in N$, we first set $z^{i,0}$ equal to the input bundle $a_i$. Then we proceed by using Lemma \ref{lemma1} for $k=1$, i.e., we reformulate the output of the $1$\textsuperscript{st} layer as the linear constraints \eqref{imp00}, \eqref{imp11} and \eqref{imp22}. We iterate this procedure until we have reformulated the final layer, i.e, $k=K_i$. Doing so for each bidder $i\in N$ and adding the feasibility constraints yields \eqref{MIP}.
\end{proof}
Using the MIP formulation \eqref{MIP}, we can solve the DNN-based WDP using standard optimization packages like \textsc{Cplex}.\footnote{For solving MIPs of the form \eqref{MIP} in our experiments we use \textsc{Cplex 12.8.0.0 } with the python library \textsc{DOcplex 2.4.61}.}

We now provide a simple worked example for how to reformulate \eqref{P1math} into \eqref{MIP} which illustrates Theorem 1.
\begin{example}
Consider a setting with one bidder ($n=1$), $m$ items ($d^1_0=m$) and one hidden layer ($K_1=2$). Given $W^{1,0}\in\mathbb{R}^{d^1_1\times m},\, W^{1,1} \in \mathbb{R}^{1\times d^1_1}$ ,$b^{1,0}\in\mathbb{R}^{d^1_1},$ and $b^{1,1} \in \mathbb{R}$, \eqref{P1math} can be written as
    \begin{align*}
    \max\limits_{a_1\in \mathcal{X}^1}&\left\{\max\left(0,W^{1,1}\max\left(0,W^{1,0}a_1+b^{1,0}\right)+b^{1,1}\right)\right\}\\
    \textrm{s.t.}\quad& a_{1j}\in \{0,1\},\quad \forall j \in M,
    \end{align*}
         where the constraint $\sum_{i \in N}a_{ij}\le1,\, \forall j \in M$ is redundant in this case, since we only consider one bidder. First, we replace the inner maximum using $y^{1,1}\in \{0,1\}^{d^1_1}, z^{1,1},s^{1,1}\in \mathbb{R}^{d^1_1}$ and arrive at the equivalent optimization problem
    \begin{align*}
    &\max\limits_{\substack{a_1\in \mathcal{X}^1\\z^{1,1},\, s^{1,1},\, y^{1,1}}}\left\{\max\left(0,W^{1,1}z^{1,1}+b^{1,1}\right)\right\}\\
    \textrm{s.t.}\quad&z^{1,1}-s^{1,1}=W^{1,0}a_1+b^{1,0}\\
    &0\le z^{1,1}\le y^{1,1}\cdot L\\
    &0\le s^{1,1} \le (1-y^{1,1}) \cdot L\\
    &a_{1j}\in \{0,1\},\quad \forall j \in M.
    \end{align*}
    Applying Lemma 1 again by using $y^{1,2}\in \{0,1\}$ and $z^{1,2},s^{1,2}\in \mathbb{R}$ yields the final MIP formulation
    \begin{align*}
    &\max\limits_{\substack{a_1\in \mathcal{X}^1,\, z^{1,k}\\y^{1,k},\, s^{1,k},\, k\,\in\,\{1,2\}}}\left\{z^{1,2}\right\}\\
    \textrm{s.t.}\quad&z^{1,1}-s^{1,1}=W^{1,0}a_1+b^{1,0}\\
    &0\le z^{1,1}\le y^{1,1}\cdot L\\
    &0\le s^{1,1} \le (1-y^{1,1}) \cdot L\\
    &z^{1,2}-s^{1,2}=W^{1,1}z^{1,1}+b^{1,1}\\
    &0\le z^{1,2}\le y^{1,2}\cdot L\\
    &0\le s^{1,2} \le (1-y^{1,2}) \cdot L\\
    &a_{1j}\in \{0,1\},\quad \forall j \in M.
    \end{align*}
\end{example}
\begin{remark}\label{RemarkMIP}
The number of decision variables in the MIP defined in \eqref{MIP} is given by {\small
$$\underbrace{\sum_{i \in N}}_{\text{\#bidders}}\left(\underbrace{m}_{\text{\#items}}+ \underbrace{3}_{y^{i,k},\, s^{i,k},  z^{i,k}}\cdot\left(\sum_{k=1}^{\overbrace{K_i-1}^{\text{\#hidden layers}}}\underbrace{d^i_k}_{\text{\#nodes per layer}}+\overbrace{1}^{\text{output}}\right)\right)$$}
\end{remark}

\section{Experimental Evaluation}\label{ExperimentalDesign}
In this section, we evaluate the performance of our deep
learning-powered ICA and compare it against the SVR-based
approach using quadratic kernels by \cited{brero2018combinatorial}. We release our code under an open-source license at: https://github.com/marketdesignresearch/DL-ICA.

\subsection{Experiment setup}
Spectrum auctions are one of the most prominent applications of CAs, which is why we choose them for our experiments. Specifically, we use the spectrum auction test suite (SATS) version 0.6.4 \cite{weiss2017sats}.\footnote{Experiments were conducted on machines with Intel Xeon E5-2650 v4 2.20GHz processors with 20 cores.} SATS enables us to generate 1000s of CA instances in different domains. Furthermore, we have access to each bidder's true value  $v_i(x)$ for all $2^m$ possible bundles $x\in \mathcal{X}$ as well as the efficient allocation $a^*\in \mathcal{F}$, which we can use to measure the efficiency of any other allocation $a$ by $V(a)/V(a^*)$. We evaluate our approach in the following three domains:

\textit{The Global Synergy Value Model} (GSVM) \cite{goeree2010hierarchical} consists of $6$ \textit{regional bidders}, $1$ \textit{national bidder}, and $18$ items. In GSVM the value of a package increases by a certain percentage with \textit{every} additional item of interest. Thus, the value of a bundle only depends on the total number of items contained in a bundle which makes it one of the simplest models in SATS. In fact, bidders' valuations can be exactly learned by SVRs with quadratic kernels (\cited{brero2019workingpaper}) which implies that the valuations exhibit at most two-way interactions between items. 

\textit{The Local Synergy Value Model} (LSVM) \cite{scheffel2012impact} consists of $5$ \textit{regional bidders}, $1$ \textit{national bidder} and $18$ items. The items are arranged on a rectangle of size $3\times6$. The national bidder is interested in all items, while the regional bidders are only interested in certain subsets of items. Complementarities  arise from spatial proximity of items and are modeled via a logistic function, which makes it more complex than GSVM.

The \textit{Multi-Region Value Model} (MRVM) \cite{weiss2017sats} consists of $98$ items and $10$ bidders. It models large US and Canadian spectrum auctions and captures both geography (different regions) as well as frequency dimensions (different bands). A bidder
is categorized as \textit{national}, \textit{regional} or \textit{local}, depending on the magnitude of the synergies between different regions.

In Sections \ref{predperformance} and \ref{effresults}, we first evaluate our approach in detail using the two medium-sized domains GSVM and LSVM. Then, in Section \ref{scalingtolargerdomains}, we use MRVM to evaluate how well our approach scales to very large domains.

\subsection{Prediction Performance}\label{predperformance}
We first compare the \textit{prediction performance} of DNNs to SVRs. Using SATS, we generate a data set of bundle-value pairs $\{(x^{(k)},v_i(x^{(k)}))\}$ for all bidders $i \in N$.
For each auction domain, we draw $100$ auction instances uniformly at random. For each such instance, we sample, for each bidder type, a training set $T$ of equal size and a disjoint test set $V$ consisting of all remaining bundles, i.e., $|V|:=2^{|M|}-|T|$. For each bidder type, we train the ML algorithm on $T$ and test it on $V$. We report the mean absolute error (MAE) for both bidder types averaged over the $100$ instances.\footnote{For training the DNNs, we use the MAE as the loss function.}

We denote by $[d_1,d_2,d_3]$ a 3-hidden-layer DNN with $d_1$, $d_2$ and $d_3$ hidden nodes, respectively. For both, SVRs and DNNs, we performed a hyperparameter optimization for each bidder type. For the DNNs, we optimized the architecture\footnote{We considered the following architectures: for the national bidders: [10], [100], [10,10], $\ldots$, [100,100,100,100], and for the regional bidders: [32], [100], [32,32], $\ldots$, [100,100,100,100].}, the L2-regularization parameter for the affine mappings, the dropout rate per layer, and the learning rate of ADAM. For SVRs with a quadratic kernel $k(x,y):=x^Ty+\gamma(x^Ty)^{2}$, we optimized $\gamma$ (i.e., the influence of the quadratic term),  the regularization parameter $C$, and the loss function parameter $\epsilon$. In what follows, we present the winner models resulting from this hyperparameter optimization.

\begin{table}[t!]
        \centering
        \resizebox{1\columnwidth}{!}{
                \begin{tabular}{cccccc}
                        \toprule
                        && Bidder & DNN & & \\
                        ML Algorithm  & $|T|$ & Type & Architecture & MAE\textsubscript{train} & MAE\textsubscript{test}\\
                        \midrule
                        & \multirow{2}{*}{50}  &  National & [100] &  1.99&  \textbf{5.25} {\footnotesize(0.11)}\\
                        &  &   Regional & [100] & 2.06 &  \textbf{6.20} {\footnotesize(0.19)} \\
                        \cmidrule{2-6}
                        \multirow{2}{*}{DNNs} & \multirow{2}{*}{100} &  National & [100] & 2.10& \textbf{3.66} {\footnotesize(0.07)}\\
                        & &  Regional & [100] &  2.71&  \textbf{4.64} {\footnotesize(0.13)}\\
                        \cmidrule{2-6}
                        & \multirow{2}{*}{200} &  National & [100]& 1.61 &  \textbf{2.22} {\footnotesize(0.05)}\\
                        &  &  Regional & [100] &2.11&  \textbf{2.89} {\footnotesize(0.09)} \\                                                                       
                        \midrule
                        \midrule
                        & & & Kernel & & \\
                        \midrule
                        & \multirow{2}{*}{50}&  National & quadratic & 0.03&  \textbf{4.38} {\footnotesize(0.11)}\\
                        &  &  Regional & quadratic  & 0.03 &  \textbf{4.98} {\footnotesize(0.20)}\\
                        \cmidrule{2-6}
                        \multirow{2}{*}{SVRs} &\multirow{2}{*}{100}&  National &quadratic  & 0.03 &  \textbf{1.71} {\footnotesize(0.04)}\\
                        & &  Regional & quadratic & 0.03&  \textbf{2.07} {\footnotesize(0.07)}\\
                        \cmidrule{2-6}
                        &\multirow{2}{*}{200}&  National & quadratic& 0.03&  \textbf{0.12} {\footnotesize(0.00)}\\
                        &  &  Regional & quadratic  & 0.03&  \textbf{0.13} {\footnotesize(0.00)}\\
                        \bottomrule
                \end{tabular}   
        }
        \caption{Prediction performance in GSVM. All results are averaged over
100 auction instances. For MAE\textsubscript{test}, standard errors are shown in parentheses.}
\label{ESTGSVM}
\end{table}

In Table \ref{ESTGSVM}, we present  prediction performance results for the GSVM domain.  Consider the last column of the table, which shows the MAE on the test set. We observe the very good prediction performance of the SVRs and in particular that the test error converges to $0$ when increasing $|T|$. This is due to the fact that in GSVM, bidders' value functions can be perfectly captured by quadratic kernels. In this sense, GSVM represents a ``worst case'' auction domain w.r.t. our comparison of DNNs against quadratically-kernelized SVRs. Looking at the performance of the DNNs, we observe that the test error also decreases with $|T|$, but, not surprisingly, is always larger than for the SVRs with quadratic kernels. Furthermore, we observe that the optimal architectures are always a $1$-hidden layer network.

\begin{table}[htbp]
\centering
\resizebox{1\columnwidth}{!}{
        \begin{tabular}{cccccc}
                \toprule
                &  & Bidder & DNN &   & \\
                ML Algorithm  & $|T|$ & Type & Architecture & MAE\textsubscript{train} & MAE\textsubscript{test}\\
                \midrule
                & \multirow{2}{*}{50}  &  National  & [10] & 24.68&  \textbf{29.90} {\footnotesize(0.23)}\\
                &  &   Regional & [100] & 4.22 &  \textbf{16.58} {\footnotesize(0.39)} \\
                \cmidrule{2-6}
                \multirow{2}{*}{DNNs} & \multirow{2}{*}{100} &  National & [10, 10, 10] & 9.01 & \textbf{25.62} {\footnotesize(0.36)}\\
                &        &  Regional &  [100] & 5.01&  \textbf{13.74} {\footnotesize(0.26)}\\
                \cmidrule{2-6}
                & \multirow{2}{*}{200} &  National & [10, 10] & 10.52&  \textbf{20.58} {\footnotesize(0.21)}\\
                &  &  Regional &  [100, 100, 100] & 3.64&  \textbf{11.27} {\footnotesize(0.23)} \\                                                                       
                \midrule
                \midrule
                & & & Kernel & & \\
                \midrule
                &  \multirow{2}{*}{50} &  National & quadratic  & 18.51&  \textbf{32.61} {\footnotesize(0.59)}\\
                &                                          &  Regional & quadratic & 3.11 &  \textbf{15.30} {\footnotesize(0.34)}\\
                \cmidrule{2-6}
                \multirow{2}{*}{SVRs} &\multirow{2}{*}{100}&  National & quadratic&  20.03 &  \textbf{27.86} {\footnotesize(0.28)}\\
                &  &  Regional & quadratic & 3.21 &  \textbf{14.21} {\footnotesize(0.28)}\\
                \cmidrule{2-6}
                &\multirow{2}{*}{200}&  National & quadratic& 20.03 &  \textbf{25.44} {\footnotesize(0.16)}\\
                &  &  Regional & quadratic&  8.23 &  \textbf{12.67} {\footnotesize(0.26)}\\
                \bottomrule
\end{tabular}   
}
\caption{Prediction performance in LSVM. All results are averaged over
100 auction instances. For MAE\textsubscript{test}, standard errors are shown in parentheses.}
\label{ESTLSVM}
\end{table}

In Table \ref{ESTLSVM}, we present the results for the more complex LSVM domain. We observe that for $|T|=50$, the DNNs and SVRs with quadratic kernels have similar test error. But for $|T|=100$ and $|T|=200$, the DNNs significantly outperform the SVRs with quadratic kernels. Specifically, DNNs better capture the national bidder in LSVM, which is important, since this bidder is interested in all items and usually gets a large portion of the items in the final allocation, which matters a lot for efficiency. In contrast to GSVM, we observe that for $|T|\ge100$, multi-hidden-layer networks were found to be best. This suggests that DNNs may indeed be advantageous for capturing more complex preference structures.\footnote{We also evaluated the prediction performance of other kernels (linear, gaussian and exponential) and observed that DNNs were as good or better for almost all combinations of bidder types and $|T|$.} 

\subsection{Efficiency Results}\label{effresults}
Finally, we compare the economic efficiency of our DNN-powered ICA against the SVR-powered ICA. When conducting the efficiency experiments, we follow \cited{brero2018combinatorial} and assume that bidders answer all value queries truthfully (i.e., $\hat{v}_i=v_i$). Furthermore, we also use their experiment setup and define a cap $c_e$ on the total number of value queries in Algorithm \ref{ea} and set $c_e:=50$. The initial set of reported bundle-value pairs $B^0_i$ per bidder $i$ is drawn uniformly at random. We denote the number of initial reports by $c_0:=|B^0_i|, \, \forall i \in N$, resulting in a maximum of $c_0 + n\cdot(c_e-c_0)$ queries per bidder.

\subsubsection{GSVM.}
In Table \ref{EFFGSVMTRAIN}, we first present the results from comparing nine different network architectures on a training set of $100$ GSVM instances. As we can see, the winning model is among the largest multi-layer networks we tested. It is noteworthy that the one-hidden-layer network (R:[100]$|$N:[100]), which performed best in terms of prediction performance, did not perform best in terms of efficiency.  

In Table \ref{EFFGSVMTEST}, we compare the performance of the winner model from Table \ref{EFFGSVMTRAIN} (see Appendix~A for configuration details) against the SVR-based approach on a test set of $100$ GSVM instances. Even though GSVM can be perfectly captured by quadratic kernels, our DNN-based approach achieves a similar result w.r.t. efficiency, where the difference in means is not statistically significant ($p=0.337$).
\begin{table}[t!]
        \resizebox{\columnwidth}{!}{
                \setlength\tabcolsep{2.5pt}
                \begin{tabular}{lcccc}
                        \toprule
                        \textbf{DNN Architectures}\footnotemark &  $\boldsymbol{c_0}$ & \textbf{Efficiency \%}& \textbf{Revenue \%}\footnotemark\\
                        \midrule
                        R:$[16,16]\,\boldsymbol{|}\,$N:$[10,10]$  & 40  & 98.53\% & 69.26\% \\
                        R:$[16,16]\,\boldsymbol{|}\,$N:$[10,10]$  & 30  & 98.41\% & 68.29\% \\
                        R:$[16,16]\,\boldsymbol{|}\,$N:$[10,10,10]$  & 40  & 98.51\% & 68.91\% \\
                        R:$[16,16]\,\boldsymbol{|}\,$N:$[10,10,10]$  & 30 & 98.32\% & 68.59\% \\
                        R:$[32,32]\,\boldsymbol{|}\,$N:$[10,10]$  & 40  & 98.75\% & 71.14\% \\
                        \textbf{R:[32,32] $\boldsymbol{|}$ N:[10,10]}  & \textbf{30}  & \textbf{98.94\%} & \textbf{68.47\%} \\ 
                        R:$[32,32]\,\boldsymbol{|}\,$N:$[10,10,10]$  & 40  & 98.69\% & 71.63\% \\
                        R:$[32,32]\,\boldsymbol{|}\,$N:$[10,10,10]$  & 30  & 98.92\% & 68.88\% \\
                        R:$[100]\,\boldsymbol{|}\,$N:$[100]$  & 30  & 98.27\% & 66.73\% \\
                        \bottomrule
                \end{tabular}
        }
        \caption{Efficiency results for $9$ configurations of a DNN-powered ICA on a training set of $100$ GSVM auction instances. The selected winner model is marked in bold. All results are averaged over the $100$ auction instances.}
        \label{EFFGSVMTRAIN}
\end{table}
\footnotetext[11]{We denote by R and N the architectures used for the regional- and national bidders, respectively. }
\footnotetext[12]{Revenue is calculated as $(\sum_{i \in N}p_i^{pvm})/V(a^*).$}
\begin{table}[t!]
        \resizebox{\columnwidth}{!}{
                \setlength\tabcolsep{2.5pt}
                \begin{tabular}{cccccc}
                        \toprule
                         \textbf{Auction} &  & \textbf{Max} & \%& \%& \textbf{t-test on} \\
                        \textbf{Mechanism} & \textbf{\#Queries} & \textbf{\#Queries} & \textbf{Efficiency} &\textbf{Revenue}  &\textbf{Efficiency\footnotemark} \\
                        \midrule
                        VCG & $2^{18}$ & $2^{18}$ & 100.00 {\footnotesize(0.00)} & 80.4 & -\\
                        \midrule
                         SVR-ICA & 41.9 & 42.8 & \textbf{98.85} {\footnotesize(0.13)} & 77.80& \multirow{2}{*}{0.337}\\
                        \cmidrule{1-5}
                         DNN-ICA & 53 & 78 & \textbf{98.63} {\footnotesize(0.18)} & 67.81 &\\
                        \bottomrule
                \end{tabular}
        }
        \caption{A comparison of the DNN-powered ICA against the SVR-powered ICA and VCG (as reported in \protect\cited{brero2018combinatorial}) on a test set of $100$ GSVM instances. All results are averaged over the 100 instances. For efficiency, standard errors are shown in parentheses.}
        \label{EFFGSVMTEST}
\end{table}
\footnotetext[13]{We performed a two-sided unpaired \textit{Welch Two Sample t-test}  with $\mathcal{H}_0:\mu_1=\mu_2$ against $\mathcal{H}_A:\mu_1\neq\mu_2$. We thank \cited{brero2018combinatorial} for providing us with the detailed results of their experiments to enable all t-tests reported in this paper.}

\subsubsection{LSVM.}
We now turn to the more complex LSVM domain. We first select a winner model based on a training set of $100$ LSVM instances (Table \ref{EFFLSVMTRAIN}). As in GSVM, the best model is among the largest architectures and the best model w.r.t. prediction performance does not yield the highest efficiency.

\begin{table}[t!]
        \resizebox{\columnwidth}{!}{
        \begin{tabular}{lcccc}
                \toprule
                \textbf{DNN Architectures}&  $\boldsymbol{c_0}$ & \textbf{Efficiency \%}& \textbf{Revenue \%} \\
                \midrule
                R:$[16,16]\,\boldsymbol{|}\,$N:$[10,10]$  & 40 & 97.40 & 60.51 \\
                R:$[16,16]\,\boldsymbol{|}\,$N:$[10,10]$  & 30 & 96.87 & 56.85 \\
                R:$[16,16]\,\boldsymbol{|}\,$N:$[10,10,10]$  & 40 & 97.45 & 61.15 \\
                R:$[16,16]\,\boldsymbol{|}\,$N:$[10,10,10]$  & 30 & 97.12 & 59.31 \\
                R:$[32,32]\,\boldsymbol{|}\,$N:$[10,10]$  & 40 & 97.40 & 62.01 \\
                R:$[32,32]\,\boldsymbol{|}\,$N:$[10,10]$  & 30 & 96.83 & 59.07 \\
                \textbf{R:[32,32] $\boldsymbol{|}$ N:[10,10,10]} & \textbf{40} & \textbf{97.74} & \textbf{61.95} \\
                R:$[32,32]\,\boldsymbol{|}\,$N:$[10,10,10]$  & 30 & 97.12 & 59.56 \\
                R:$[100]\,\boldsymbol{|}\,$N:$[10]$  & 40 & 96.78 & 58.71 \\
                \bottomrule
        \end{tabular}}
\caption{Efficiency results for  $9$ configurations of a DNN-powered ICA on a training set of 100 LSVM auction instances. The selected winner model is marked in bold. All results are averaged over the 100 auction instances.}
\label{EFFLSVMTRAIN}
\end{table}

\begin{table}[t!]
        \resizebox{\columnwidth}{!}{
                \setlength\tabcolsep{2.5pt}
                \begin{tabular}{ccccccc}
                \toprule
                 \textbf{Auction} &  & \textbf{Max} & \%&\% &\textbf{t-test on}\\
                \textbf{Mechanism} & \textbf{\#Queries} & \textbf{\#Queries} & \textbf{Efficiency} &\textbf{Revenue} &\textbf{Efficiency}\\
                \midrule
                VCG & $2^{18}$ & $2^{18}$ & 100.00 {\footnotesize(0.00)} & 83.4 & -\\
                \midrule
                SVR-ICA & 48.2 & 52.8 & \textbf{96.03} {\footnotesize(0.33)}& 65.60 & \multirow{2}{*}{$4\mathrm{e}{-5}$}\\
                \cmidrule{1-5}
                DNN-ICA & 65 & 77 & \textbf{97.74} {\footnotesize(0.24)} & 62.45 &\\
                \bottomrule
                \end{tabular}}
                \caption{A comparison of the DNN-powered ICA against the SVR-powered ICA and VCG (as reported in \protect\cited{brero2018combinatorial}) on a test set of $100$ LSVM auction instances. All results are averaged over the $100$ instances. For efficiency, standard errors are shown in parentheses.}
                \label{EFFLSVMTEST}
\end{table}

In Table \ref{EFFLSVMTEST}, we compare the performance of the selected winner model from Table \ref{EFFLSVMTRAIN} (see Appendix~A for configuration details) against the SVR-based approach on a test set of $100$ new auction instances. Here, we see that our DNN-powered ICA substantially outperforms the SVR-powered ICA by $1.71\%$ points, and that the difference in means is highly statistically significant ($p=4\mathrm{e}{-5}$). This demonstrates the advantage of DNNs over SVRs with quadratic kernels in complex domains like LSVM.

In Figure \ref{fig:lsvmhist}, we present a histogram of the efficiency obtained by the selected winner model on the test set.
We see that for $29$ auction instances, our approach (impressively) obtains an economic efficiency of $100\%$. However, for two instances, the efficiency is less than $90\%$. Thus, it is a promising avenue for future work to investigate these outliers to further increase the average efficiency.

\begin{figure}[t!]
        \centering
        \includegraphics[width=0.9\columnwidth]{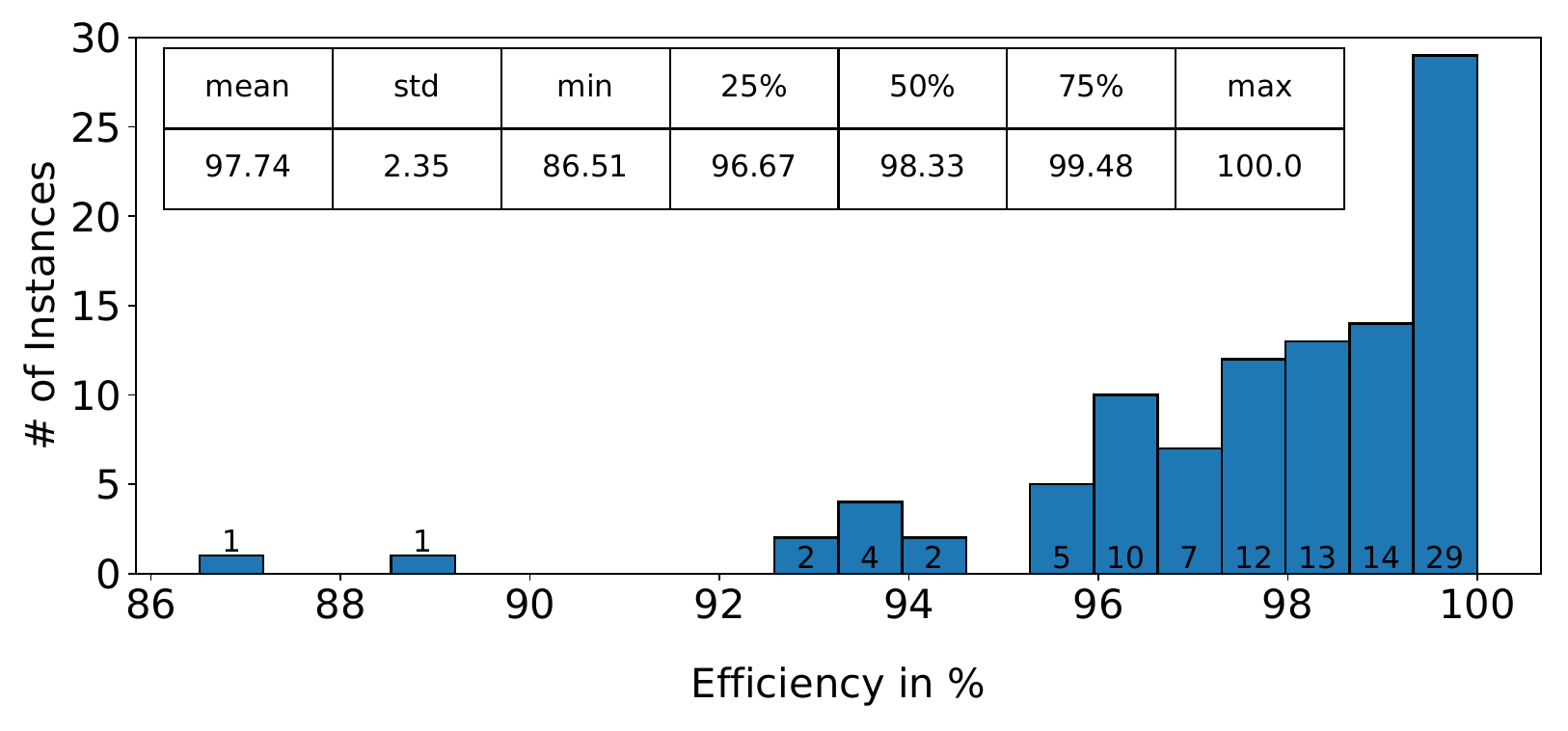}
        \caption{Histogram of efficiency results in LSVM of the selected DNN winner model on the test set.}
        \label{fig:lsvmhist}
\end{figure}

\begin{remark}
In Tables \ref{EFFGSVMTEST} and \ref{EFFLSVMTEST}, we see that our DNN-based approach achieves lower revenue than the SVR-based approach. This may be explained as follows. A bidder's payment in PVM, depends on the difference between the social welfare in the marginal and the main economy. However, PVM has one oddity: a bidder's payment may be negative. This happens more frequently with DNNs than with SVRs: consider an auction where bidder $i$ is not allocated in the main economy. Then, ideally, the allocation (and thus the welfare) should be exactly the same in the marginal economy where bidder $i$ is excluded, resulting in a zero payment. When using SVRs, this is guaranteed if the same set of bundle-value pairs is used in the main and marginal economy, because SVRs use a deterministic algorithm for estimation. In contrast, DNNs use a non-deterministic algorithm, sometimes resulting in different allocations in the main and marginal economies. However, this is more a limitation of PVM itself. In practice, one should lower bound the payments as also suggested by \cited{brero2019workingpaper}. Lower-bounding all payments by zero increases the revenue in GSVM by 7.9\% points and in LSVM by 8.3\% points.
\end{remark} 

\subsection{Scaling to Larger Domains}\label{scalingtolargerdomains}
We now present results for the MRVM domain (with 98 items and 10 bidders) to show that our DNN-powered ICA also scales well to larger domains (we present detailed runtime results in Section \ref{runtimeanalysis}).
We use the experiment setup of \cited{brero2018combinatorial} and set $c_e:=100$.

\begin{table}[b!]
        \resizebox{\columnwidth}{!}{
                \begin{tabular}{lccccc}
                        \toprule
                        \textbf{DNN Architectures} &  $\boldsymbol{c_0}$ & \textbf{Efficiency \%}& \textbf{Revenue \%} \\
                        \midrule
                        L:$[10,10]\,\boldsymbol{|}\,$R:$[32,32]\,\boldsymbol{|}\,$N:$[32,32]$  & 70 & 93.54 & 31.02\\
                        L:$[10,10]\,\boldsymbol{|}\,$R:$[32,32]\,\boldsymbol{|}\,$N:$[32,32]$  & 50 & 94.07 & 33.51\\
                        L:$[16,16]\,\boldsymbol{|}\,$R:$[16,16]\,\boldsymbol{|}\,$N:$[16,16]$  & 30 & 94.46 & 31.39\\
                        L:$[10,10]\,\boldsymbol{|}\,$R:$[32,32]\,\boldsymbol{|}\,$N:$[32,32]$  & 30 & 94.73 & 31.88\\
                        L:$[16,16]\,\boldsymbol{|}\,$R:$[16,16]\,\boldsymbol{|}\,$N:$[16,16]$  &  20 & 94.88
                         & 30.31\\
                        L:$[10,10]\,\boldsymbol{|}\,$R:$[32,32]\,\boldsymbol{|}\,$N:$[32,32]$  & 20 & 94.42
                         & 34.23\\
                        \textbf{L:[16,16]$\boldsymbol{|}$ R:[16,16] $\boldsymbol{|}$ N:[16,16]}  & \textbf{10} & \textbf{95.00}
                         & \textbf{31.97}\\
                        L:$[10,10]\,\boldsymbol{|}\,$R:$[32,32]\,\boldsymbol{|}\,$N:$[32,32]$  &10  & 94.54 & 34.78\\
                        L:$[10,10]\,\boldsymbol{|}\,$R:$[16,16,16]\,\boldsymbol{|}\,$N:$[16,16,16]$ & 10 & 94.74 & 31.12\\
                        \bottomrule
        \end{tabular}}
        \caption{Efficiency results for 9 configurations of a DNN-powered ICA on a training set of $19$ MRVM auction instances. The selected winner model is marked in bold. All results are averaged over the $19$ auction instances.}
        \label{EFFMRVMTRAIN}
\end{table}

\begin{table}[t!]
        \resizebox{\columnwidth}{!}{
                \setlength\tabcolsep{2.5pt}
                \begin{tabular}{cccccc}
                        \toprule
                        \textbf{Auction} &  & \textbf{Max} & \%&\% &\textbf{t-test on}\\
                        \textbf{Mechanism} & \textbf{\#Queries} & \textbf{\#Queries} & \textbf{Efficiency} &\textbf{Revenue} &\textbf{Efficiency}\\
                        \midrule
                        VCG & $2^{98}$ & $2^{98}$ & 100.00 {\footnotesize(0.00)} & 44.3 & -\\
                        \midrule
                        SVR-ICA & 265 & 630 & \textbf{94.58} {\footnotesize(0.14)}& 35.20 & \multirow{2}{*}{0.0268}\\
                        \cmidrule{1-5}
                        DNN-ICA & 334 & 908 & \textbf{95.04} {\footnotesize(0.14)} & 30.59 &\\
                        \bottomrule
        \end{tabular}}
        \caption{A comparison of the DNN-powered ICA against the SVR-powered ICA and VCG (as reported in \protect\cited{brero2018combinatorial}) on a test set of $50$ MRVM auction instances. All results are averaged over the $50$ instances. For efficiency, standard errors are shown in parentheses.}
        \label{EFFMRVMTEST}
\end{table}

In Table \ref{EFFMRVMTRAIN}, we present the results for different DNN architectures and different values of $c_0$, evaluated on a training set of MRVM instances. First, we observe  that the efficiency increases as we decrease $c_0$. This can be explained by the fact that a smaller $c_0$ tends to lead to more iterations of the preference elicitation algorithm, resulting in a larger number of elicited bundle-value pairs. In terms of which DNN architectures performed better or worse, no clear pattern emerged.
 
In Table \ref{EFFMRVMTEST}, we compare the performance of the  selected winner model from Table \ref{EFFMRVMTRAIN} (see Appendix~A for configuration details) against the SVR-based approach on a test set of $50$ MRVM instances. We see that our DNN-powered ICA outperforms the SVR-powered ICA by $0.46\%$ points. While this is a relatively modest increase in efficiency, a $t$-test shows that the difference in means is statistically significant $(p=0.0268)$. We also observe that our DNN-based approach (while obeying all caps) asks a larger number of queries than the SVR-based approach. It is unclear how much of the efficiency increase is driven by the DNN or by the larger number of queries. Future work should compare the two approaches by holding the total number of queries constant.

\subsection{Runtime Analysis}\label{runtimeanalysis}

In Table \ref{DNNruntime}, we  present runtime results for our DNN-powered ICA for the winner models from Tables \ref{EFFGSVMTEST}, \ref{EFFLSVMTEST} and \ref{EFFMRVMTEST}. Specifically, we show average runtime results of the MIP \eqref{MIP}, of an iteration of Algorithm \ref{ea}, and of a whole auction (PVM). We observe that the average runtime of a whole auction takes approximately $1$ hour in GSVM and LSVM and $8$ hours in the larger MRVM domain. The increase in total runtime in MRVM can be explained by the fact that we use a smaller number of initial queries ($c_0:=10$) and a larger total query cap ($c_e:=100$) compared to LSVM and GSVM. This results in a larger number of iterations of Algorithm \ref{ea}. Additionally, MRVM consists of $10$ bidders resulting in $11$ calls of Algorithm \ref{ea} in contrast to $7$ calls in LSVM and $8$ in GSVM. Even though in MRVM the average MIP runtime is smaller, the larger number of iterations and bidders lead to this increase in total runtime. Overall, these results show that our DNN-based approach is practically feasible and scales well to the larger MRVM domain. \cited{brero2018combinatorial} do not provide runtime information such that we cannot provide a runtime comparison with SVRs.\footnote{In conversations with the authors, they told us that for gaussian and exponential kernels, their MIPs always timed out (1h cap) in GSVM, LSVM and MRVM. The average MIP runtime for the quadratic kernel was a few seconds in GSVM and LSVM. In MRVM, the quadratic kernel also regularly timed out resulting in an average MIP runtime of 30 min and of 36 h for a whole auction.}

In Figure \ref{fig:miplsvm}, we present additional MIP runtime results for selected DNN architectures in LSVM (results in GSVM and MRVM are qualitatively similar). We observe two effects: First, increasing the number of nodes per layer slightly increases the average runtime. Second, adding an additional layer (for the national bidder) significantly increases the average runtime. Not surprisingly, the largest DNN architectures lead to the highest runtime. 

The runtime of our MIPs heavily depends on the size of the ``big-M'' variable $L$. In practice, $L$ should be chosen as small as possible to obtain a MIP formulation that is as tight as possible. We initialized $L:=3000$ and tightened this bound further by using interval arithmetic (see, e.g., \cited{tjeng2018evaluating}). Recently, \cited{singh2018fast} proposed a novel technique for tightening such MIP formulations. Evaluating this in more detail is subject to future work.
\begin{table}[t!]
	\resizebox{\columnwidth}{!}{
		\begin{tabular}{cccc}
			\toprule
			Domain & $\varnothing$ MIP Runtime & $\varnothing$ Iteration Runtime & $\varnothing$ Auction Runtime  \\
			\midrule
			GSVM & 15.90 sec & 30.51 sec & 44 min \\
			LSVM & 39.75 sec & 51.69 sec & 65 min \\
			MRVM & 3.67 sec &  26.75 sec & 457 min\\
			\bottomrule
	\end{tabular}}
	\caption{Average runtime results of the selected DNN winner models in different domains. All values are averaged over $100$ (GSVM and LSVM) and $50$ (MRVM) auction instances.}
	\label{DNNruntime}
\end{table}
\begin{figure}[t!]
        \centering
        \includegraphics[width=1\columnwidth]{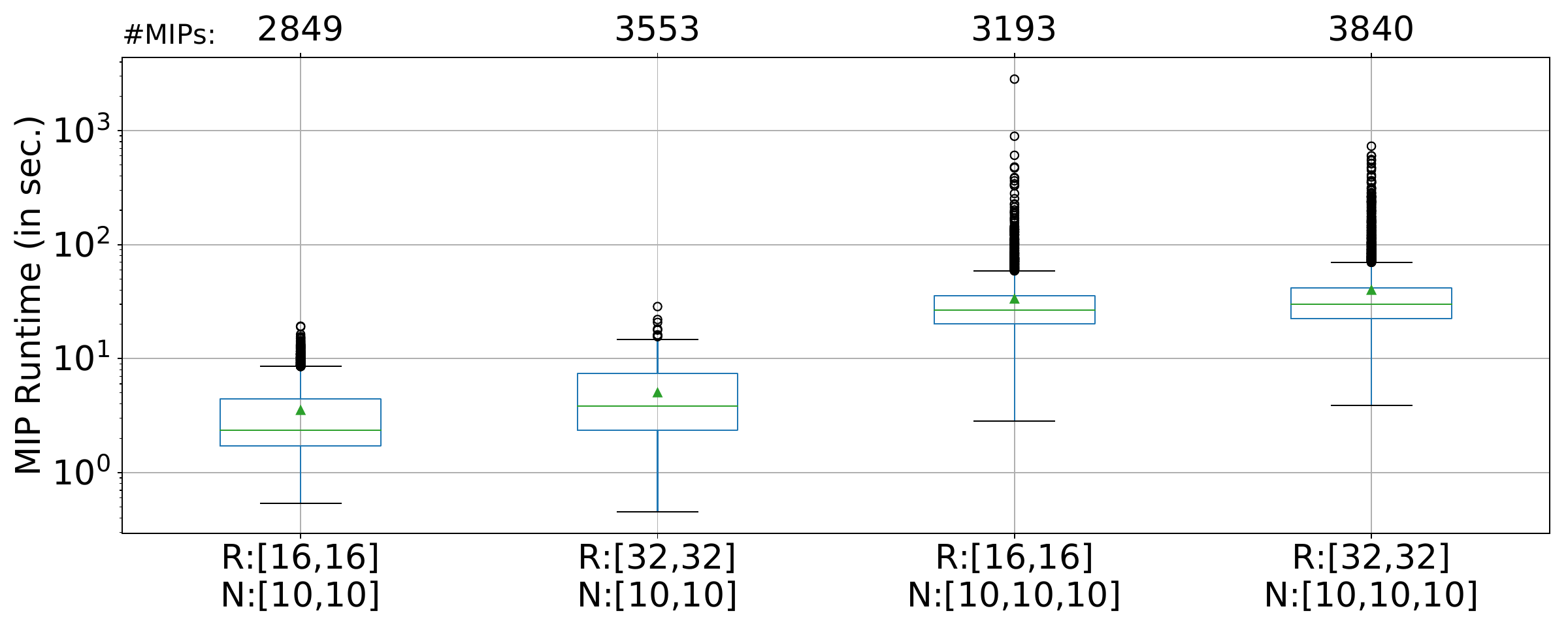}
        \caption{MIP runtimes defined in \eqref{MIP} based on $50$ different LSVM instances. Results are shown for a selection of various DNN architectures and for $c_0:=40, c_e:=50$.}
        \label{fig:miplsvm}
\end{figure}

\section{Conclusion}\label{ConlcusionandFutureWork}
In this paper, we have designed a deep learning-powered ICA. We have compared our approach against prior work using SVRs with quadratic kernels. Our experimental results have shown that our DNN-based approach leads to significantly higher economic efficiency in complex auction domains and scales well to large domains.

On a technical level, our main contribution was to reformulate the DNN-based WDP into a MIP. The main insight to achieve this was to use ReLU activation functions, which can be re-written as multiple linear constraints. From an experimental point of view, we were pleasantly surprised to see that even DNNs with a small number of layers and nodes and with a small number of training samples (i.e., bids) were able to achieve high prediction performance and ultimately high economic efficiency in the overall auction mechanism.

Future work should investigate the trade-off between larger DNN architectures and the resulting MIP runtime, to further increase efficiency. 

\section{Follow-Up Work After Publication}\label{Updates}
After the publication of the present paper, there has been a stream of follow-up papers, which have (a) further improved the ML capability of the mechanism, and (b) applied DNN-based preference elicitation to other combinatorial assignment domains.

\cited{weissteiner2022fourier} designed a \emph{Fourier analysis-based ICA} that leverages different notions of Fourier sparsity. This facilitates the intricate learning task in ICAs where only a few bids (i.e., training samples) can be elicited.

\cited{weissteiner2022monotone} designed \emph{monotone-value neural networks (MVNNs)}, a novel class of DNNs, which by design incorporate \emph{free disposal}. They experimentally showed that MVNNs lead to better generalization performance (especially in settings with only a few bids) and also to higher efficiency (when integrated into an ICA).

\cited{weissteiner2023bayesian} implemented uncertainty-based exploration for ICAs using a novel uncertainty quantification method for DNNs \cite{heiss2022nomu}. With their \emph{Bayesian optimization-based combinatorial assignment (BOCA)} mechanism they could even further increase efficiency.

\cited{soumalias2023machine} applied DNN-based preference elicitation in a combinatorial assignment domain different from that of combinatorial auctions. Concretely, \cited{soumalias2023machine} designed an MVNN-based preference elicitation mechanism for course allocation.

\section*{Acknowledgments}
We thank Gianluca Brero, Nils Olberg, and Stefania Ionescu for insightful discussions and the anonymous reviewers for helpful comments. This paper is part of a project that has received funding from the European Research Council (ERC) under the European Union's Horizon 2020 research and innovation programme (Grant agreement No. 805542).

\begin{table*}
	\centering
	\resizebox{2\columnwidth}{!}{
		\setlength\tabcolsep{4pt}
		\begin{tabular}{llrrrrrrrrrrrrrr}
			\toprule
			\multicolumn{2}{c}{\textbf{SATS}}&   \multicolumn{8}{c}{\textbf{DNN}} & \multicolumn{4}{c}{\textbf{MIP}} &  \multicolumn{2}{c}{\textbf{PVM}}\\
			\cmidrule(l{2pt}r{2pt}){1-2}
			\cmidrule(l{2pt}r{2pt}){3-10}
			\cmidrule(l{2pt}r{2pt}){11-14}
			\cmidrule(l{2pt}r{2pt}){15-16}
			\multicolumn{1}{c}{Domain} &  \multicolumn{1}{c}{Bidder} &    \multicolumn{1}{c}{Epochs}    &  \multicolumn{1}{c}{Batch Size} &   \multicolumn{1}{c}{L1\&L2 Regularization}     &  \multicolumn{1}{c}{Learning Rate} & \multicolumn{1}{c}{Architecture} & \multicolumn{1}{c}{Dropout} & \multicolumn{1}{c}{Dropout Rate} & \multicolumn{1}{c}{MinMaxScaler\footnotemark} & \multicolumn{1}{c}{Bounds Tightening} & \multicolumn{1}{c}{$L$} & \multicolumn{1}{c}{Time Limit (sec)} & \multicolumn{1}{c}{Relative Gap} & \multicolumn{1}{c}{$c_0$} & \multicolumn{1}{c}{$c_e$}\\
			\cmidrule(l{2pt}r{2pt}){1-2}
			\cmidrule(l{2pt}r{2pt}){3-10}
			\cmidrule(l{2pt}r{2pt}){11-14}
			\cmidrule(l{2pt}r{2pt}){15-16}
			GSVM  & Regional  &512 & 32& 0.00001& 0.01&[32,32] &True &0.05 &False &IA &3000 &3600 & 0.0001 &30 &50 \\
			& National  &512 & 32& 0.00001& 0.01&[10,10] &True &0.05 &False &IA &3000 &3600 & 0.0001 &30 &50 \\
			\midrule
			LSVM  & Regional  &512 & 32& 0.00001& 0.01&[32,32] &True &0.05 &False &IA &3000 &3600 & 0.0001 &40 &50 \\
			& National  &512 & 32& 0.00001& 0.01&[10,10,10] &True &0.05 &False &IA &3000 &3600 & 0.0001 &40 &50 \\
			\midrule
			MRVM  & Local  &300 & 32& 0.00001& 0.01&[16,16] &True &0.05 &[0, 500] &IA &3000 &3600 & 0.0001 &10 &100 \\
			& Regional  &300 & 32& 0.00001& 0.01&[16,16] &True &0.05 &[0, 500] &IA &3000 &3600 & 0.0001 &10 &100 \\
			& National  &300 & 32& 0.00001& 0.01&[16,16] &True &0.05 &[0, 500] &IA &3000 &3600 & 0.0001 &10 &100 \\
			\bottomrule
	\end{tabular}}
	\vskip -0.2cm
	\caption{Detailed configurations of all DNN winner models for all SATS domains.}
	\label{tab:winner_configs}
\end{table*}
\bibliography{AAAI-WeissteinerJ.6063}
\bibliographystyle{aaai20}
\appendix
\section*{Appendix}
\section{Detailed Winner Configurations}\label{sec:appendix:DNNhyperparameters}
In Table~\ref{tab:winner_configs}, we provide the detailed configurations (i.e., DNN, MIP, and PVM parameters) of the DNN winner models (i.e., DNN-ICA) from Table~\ref{EFFGSVMTEST}, Table~\ref{EFFLSVMTEST} and Table~\ref{EFFMRVMTEST}. Other parameters not listed in Table~\ref{tab:winner_configs} were set to their default values.
\footnotetext{We use \textsc{Scikit Learn's} \textit{MinMaxScaler} to simultaneously scale bidders' value reports (i.e., $v_i(x^{(k)})$) in the generated (initial) training sets $\cup_{i=1}^n B_i^0$ to an interval $[0,u]$ with $u>0$.}
\end{document}